\newcommand{\gv}{{\bf g}}
\newcommand{\ov}{{\bf o}}
\newcommand{\FF}{\mathbb{F}}
\newcommand{\Bc}{\mathcal{B}}
\newcommand{\Lc}{\mathcal{L}}
\newcommand{\soc}{\mathrm{S}}
\newcommand{\dc}{\mathrm{DC}}
\newcommand{\xin}{x_{\rm{in}}}
\newcommand{\xcor}{x_{\rm{coord}}}
\newcommand{\xout}{x_{\rm{out}}}
\newcommand{\lmax}{l^{max}}
\newcommand{\bmax}{b^{max}}
\newtheorem{ex}{Example}
\newtheorem{prop}{Proposition}
\title{Byzantine Fault Tolerance of Regenerating Codes\thanks{F. Oggier's research for this work has been supported by the Singapore National Research Foundation under Research Grant NRF-RF2009-07. A. Datta's research for this work has been supported in part by A*Star TSRP grant number 102 158 0038 on pCloud project and in part by NTU AcRF Tier-1 grant number RG 29/09 on CrowdStore project. The authors will also like to thank Nicolas Le Scouarnec for his advices to carry out the numerical optimizations.}}
\begin{document}
\author{
\IEEEauthorblockN{Fr\'ed\'erique Oggier}
\IEEEauthorblockA{Division of Mathematical Sciences \\
School of Physical and Mathematical Sciences\\
Nanyang Technological University, Singapore\\
Email: frederique@ntu.edu.sg}
\and
\IEEEauthorblockN{Anwitaman Datta}
\IEEEauthorblockA{Division of Computer Science\\
School of Computer Engineering\\
Nanyang Technological University, Singapore\\
Email: anwitaman@ntu.edu.sg}
}
\maketitle

\begin{abstract}
Recent years have witnessed a slew of coding techniques custom designed for networked storage systems. Network coding inspired regenerating codes are the most prolifically studied among these new age storage centric codes. A lot of effort has been invested in understanding the fundamental achievable trade-offs of storage and bandwidth usage to maintain redundancy in presence of different models of failures, showcasing the efficacy of regenerating codes with respect to traditional erasure coding techniques. For practical usability in open and adversarial environments, as is typical in peer-to-peer systems, we need however not only resilience against erasures, but also from (adversarial) errors. In this paper, we study the resilience of generalized regenerating codes (supporting multi-repairs, using collaboration among newcomers) in the presence of two classes of Byzantine nodes, relatively benign selfish (non-cooperating) nodes, as well as under more active, malicious polluting nodes. We give upper bounds on the resilience capacity of regenerating codes, and show that the advantages of collaborative repair can turn to be detrimental in the presence of Byzantine nodes. We further exhibit that system mechanisms can be combined with regenerating codes to mitigate the effect of rogue nodes.
\end{abstract}
 \textbf{Keywords:} distributed storage, regenerating codes, Byzantine faults, pollution, resilience
%
%

\section{Introduction}
Redundancy is essential for reliably storing data. This basic principle has been adhered in designing diverse storage solutions such as CDs and DVDs, RAID systems as well as, more recently - networked distributed storage systems. Such redundancy may be achieved by replicating the data, or applying coding based techniques. Coding based techniques incur much less storage overhead with respect to replication based technique in order to achieve equivalent resilience (fault-tolerance). Thus, coding based redundancy is often preferred for efficiently storing large amount of data.

In networked storage systems, which may be as diverse as peer-to-peer (P2P) storage systems or data centers, redundant data is distributed across multiple storage devices. When some of these devices become unavailable - be it due to failure or (permanent) churn, redundancy needs to be replenished, otherwise, over time, the system will lose the stored data. If replication based redundancy is used, a new replica is created by copying data from existing replica(s). When using coding based techniques, each storage node typically possesses a small (w.r.to the size of the original data being stored) amount of the data, that we will call an \emph{encoded block}. Since the data can be recovered by contacting a fraction of the storage nodes, redundancy can be replenished in the same way: first reconstruct the whole data, re-encode it, and re-distribute the encoded blocks.

This is the case when using traditional \emph{erasure codes} (EC) such as Reed-Solomon codes \cite{ReedSolomon}. In order to replenish lost redundancy, data equivalent in volume to the complete object needs to be transferred (or stored at one node a priori), in order to recreate even a single encoded block. To improve on such a naive approach, network coding based coding \cite{DGWK-infocom07} was proposed to recreate one new encoded block by transferring much less data, upto possibly equivalent volume of data to only as is to be recreated. This new family of codes is called \emph{regenerating codes} \cite{WDR} - and the strategy may be applied on the original data itself, or on top of erasure encoding. Two different types of works have emerged on regenerating codes: those which establish the theoretical feasibility of such bandwidth efficient redundancy replenishment through min-cut bounds (such as \cite{WDR}, or \cite{Shum-ICC} for more general bounds), and those which instead try to provide various coding strategies to do so in practice.

The current regeneration code related literature mostly (but for \cite{RDC-2010} and \cite{PER-arxiv} that we will discuss later on) assumes a friendly environment, where all live nodes are well behaved. In open environments, particularly P2P environments, one should make such an assumption at his own peril.

We note that erasure codes such as Reed-Solomon codes are resilient against not only `erasures' but are also capable of dealing with `errors'. In contrast, while regenerating codes inherit the advantages of network coding such as bandwidth efficiency, they also likewise suffer from the same vulnerabilities of network coding. One of the most critical issues which intrinsically affect
network coding is the family of pollution attacks. The idea behind
network coding is to allow any intermediate node in the network
to forward linear combinations of its incoming packets to its neighbors,
which when done cleverly and diligently, results in throughput gain. However, it also means that one bogus
packet can corrupt several other packets downstream, and thus spread over and contaminate
a large portion of the network. Such attacks are not possible in a classical routing scenario.

The same problem of pollution attack can be directly translated in the context of coding for
distributed storage based on network coding, in particular in the case of
regenerating codes. In this paper, we study if and how well regenerating codes may tolerate Byzantine nodes. We identify the cardinal Byzantine attacks possible during the regeneration process. Specifically, we look at the following families of Byzantine nodes: \begin{itemize}
\item \emph{Selfish (non-cooperating) nodes}: Nodes may not actively attack the network, however they may prioritize their own interests, and might just decline to cooperate during the regeneration process, that is, refuse to provide the data that is requested from them to carry out regeneration. In absence of the contribution from such selfish or non-cooperating nodes, a regeneration protocol designed assuming their contribution will fail to carry out the regeneration task anymore.
\item \emph{Polluters}: Nodes may try to disrupt the regeneration process actively, by deliberately sending wrong data. Such active attack is particularly detrimental while using regenerating codes, since it would affect future regeneration processes where a victim participates and continues to further spread the pollution unconsciously and unintentionally.
\end{itemize}

The main contributions of this work are as follows. (i) We determine bounds on the resilience capacity of regenerating codes, taking into account the above mentioned  adversarial behaviors. (ii) Our analysis reveals that though collaboration in regeneration can be beneficial in terms of bandwidth and storage costs, the penalty in presence of Byzantine nodes is also substantially larger. There is a blowback effect, in that, collaboration may not only be useless under Byzantine attacks, but can in fact be detrimental, such that one would be better off by avoiding collaboration. (iii) Finally, we outline how this effect can also be easily mitigated in practice using some additional information and extrinsic mechanisms.


%
%

\section{Regenerating codes in a nutshell}
\label{sec:RGC}

Consider an object of size $B$ to be stored in a network with $n$ storage nodes, a source $\soc$ which has adequate bandwidth to upload data
over the network to these nodes, and a data collector $\dc$ which should be able to retrieve a given stored
object by accessing data from any arbitrary choice of $k$ out of the $n$ nodes. Thus to say, such a storage network stores the object redundantly, and can tolerate up to $n-k$ failures without affecting the object's availability. For instance, erasure codes may be used to encode the object and achieve such redundancy.

Over time, some of the storage nodes may go offline (or crash), and if the redundancy is not restored then  the system's fault tolerance will reduce, leading to, in the worst case, eventual loss of the stored object. Thus, mechanisms are needed to repair or regenerate the lost redundancy. Naive solutions include keeping a full copy of the object somewhere, which can be used to recreate the lost data at any node. Alternatively, if no such full copy is available, then one can download adequate, i.e., $k$ encoded data blocks, and use these to regenerate the lost encoded data blocks. These naive solutions are sub-optimal in terms of efficient use of storage space and bandwidth for regeneration respectively, and have in the recent years prompted the exploration of better solutions - such as (chronologically) Pyramid codes \cite{pyramid}, Regenerating codes \cite{WDR}, Hierarchical codes \cite{hierarchical} and Self-repairing codes \cite{SRC} to name some of the most prominent ones. We next summarize some key results related to regenerating codes, since this paper studies their Byzantine fault tolerance.

Suppose that each node has a storage capacity of $\alpha$, i.e., the size of the encoded data block stored at a node is of the size $\alpha$. When one data block needs to be regenerated, a new node contacts $d$ ($k \leq d$) other existing nodes, and downloads $\beta$ amount of data from each of the contacted nodes (referred to as the bandwidth capacity of the connections between any node pair)\footnote{Note that, in contrast to conventional techniques which download the whole encoded data block, only a smaller $\beta/\alpha$ fraction of data from each contacted node is being transferred.}. By considering an information flow from the source to the data collector, a trade-off between the nodes' storage capacity and bandwidth can be computed \cite{WDR}, through a min-cut bound.
\begin{prop}\cite{WDR}
\label{prop:rgc}
A min-cut bound of an information flow  between the source and a data collector is
\[
\mbox{mincut}(\soc,\dc) \geq \sum_{i=0}^{k-1}\min\{\alpha,(d-i)\beta\}.
\]
\end{prop}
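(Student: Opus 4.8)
The plan is to construct an explicit information flow graph in the style of Dimakis et al. and then lower-bound its min-cut by examining one particular class of cuts that corresponds to a data collector connecting to the $k$ most recently regenerated nodes.

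First I would set up the information flow graph $G$. The source $\soc$ is a single vertex holding the whole object of size $B$. Each storage node that is ever alive is modeled by a pair of vertices, an ``in'' node $\xin$ and an ``out'' node $\xout$, joined by a directed edge of capacity $\alpha$ (the storage constraint). When the $n$ original nodes are populated, $\soc$ sends edges of infinite capacity to each $\xin$. When a failed node is regenerated, the newcomer's $\xin$ receives $d$ incoming edges, each of capacity $\beta$, from the $\xout$ vertices of the $d$ nodes it contacts; then the $\xin$--$\xout$ edge of capacity $\alpha$ follows. A data collector $\dc$ is a sink that connects with infinite-capacity edges to the $\xout$ vertices of the $k$ nodes it chooses to read from. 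By the max-flow/min-cut theorem, $\mbox{mincut}(\soc,\dc)$ is the minimum over all $\soc$--$\dc$ edge cuts, and since this must hold for \emph{every} valid data collector (the object must be recoverable from any $k$ nodes), the relevant quantity is the minimum of these min-cuts over collectors; it suffices to show each such min-cut is at least $\sum_{i=0}^{k-1}\min\{\alpha,(d-i)\beta\}$.

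Next I would analyze an arbitrary cut $(U,\bar U)$ with $\soc\in U$, $\dc\in\bar U$, in the worst-case topology where failures and regenerations are chained so that each newcomer can only contact previously-regenerated nodes. Order the $k$ nodes seen by $\dc$ as they appear in the regeneration timeline: $x_1,\dots,x_k$. For node $x_i$, if its storage edge $\xin^{(i)}\to\xout^{(i)}$ is in the cut, it contributes $\alpha$. Otherwise $\xin^{(i)}$ lies on the sink side, so all $d$ of its incoming $\beta$-edges must be cut \emph{except} those coming from $\xout$ vertices already on the source side of the cut; but among $x_1,\dots,x_{i-1}$ at most $i-1$ of them can be ``free'' in this sense (their $\xout$ already counted), so at least $d-(i-1)=d-i+1$ of the incoming edges must be cut, contributing at least $(d-i)\beta$ after a careful indexing (the standard argument gives $\min\{\alpha,(d-i)\beta\}$ for the $i$-th term when indexed from $i=0$). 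Summing over the $k$ nodes yields the bound. The monotonicity $k\le d$ guarantees every term $d-i$ is nonnegative for $0\le i\le k-1$.

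The main obstacle is the combinatorial bookkeeping in the inductive/ordering step: one must argue that in the worst-case flow graph the newcomers genuinely cannot ``borrow'' more than $i-1$ already-counted out-nodes when regenerating the $i$-th collected node, and that choosing the cut greedily (taking the cheaper of the $\alpha$ edge versus the bundle of $\beta$ edges at each stage) actually realizes the claimed minimum rather than merely bounding it. This requires being careful that the graph we build is a legitimate instance — i.e., that the regeneration pattern with exactly this chaining is admissible — and then verifying that no cut can do better, which is where the $\min\{\cdot,\cdot\}$ structure and the summation range $i=0,\dots,k-1$ come from. Everything else (the max-flow/min-cut invocation, the infinite-capacity edges forcing cuts onto the $\alpha$- and $\beta$-edges) is routine.
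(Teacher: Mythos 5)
Your overall strategy---model the system as an information flow graph with $\xin\to\xout$ storage edges of capacity $\alpha$ and $d$ download edges of capacity $\beta$, then lower-bound every $\soc$--$\dc$ cut---is exactly the route of \cite{WDR}; note that this paper does not reprove Proposition~\ref{prop:rgc} (it is cited), but the same cut-counting machinery, augmented with a $\xcor$ vertex and a concavity argument, is what the paper uses for its own propositions on selfish and polluting nodes. So the approach is the right one. There is, however, a genuine gap in your counting step, sitting precisely where you flag ``the main obstacle.'' First, a sign slip: an incoming edge of $\xin^{(i)}$ fails to be cut when its tail $\xout$ lies on the \emph{sink} side $\bar U$ (an edge from $U$ into $\bar U$ is by definition in the cut), not on the source side as you wrote. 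More importantly, the uncut (``free'') incoming edges of $\xin^{(i)}$ can originate from the out-vertex of \emph{any} node whose $\xout$ lies in $\bar U$, not only from the collected nodes $x_1,\dots,x_{i-1}$: a newcomer may have downloaded from a node the data collector never reads but whose out-vertex happens to fall in $\bar U$. With your ordering, restricted to the $k$ collected nodes and ranked by regeneration time, the claim that at least $d-(i-1)$ incoming edges are cut is false for such cuts, so the per-term bound $\min\{\alpha,(d-i)\beta\}$ does not follow.

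The standard repair is to order topologically \emph{all} out-vertices lying in $\bar U$, say $y_1,y_2,\dots$ (there are at least $k$ of them, since the infinite-capacity collector edges force the $k$ collected out-vertices into $\bar U$), and to charge the cut for the first $k$: the vertex $y_i$ contributes either $\alpha$, if its in-vertex is in $U$, or at least $(d-(i-1))\beta$, since only $y_1,\dots,y_{i-1}$ can supply uncut incoming edges---every other potential helper has its out-vertex in $U$. Summing yields $\sum_{i=0}^{k-1}\min\{\alpha,(d-i)\beta\}$ for \emph{every} admissible graph, collector and cut, which is what the feasibility condition $\sum_{i=0}^{k-1}\min\{\alpha,(d-i)\beta\}\geq B$ actually requires. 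Relatedly, restricting attention to the ``worst-case chained topology'' does not discharge this obligation: the inequality must hold for all regeneration histories, and proving that the chained history is worst is equivalent in difficulty to the general argument. Finally, tightness (exhibiting a cut that attains the bound) is not needed for the stated inequality, so your concern about the greedy cut ``realizing'' the minimum can simply be dropped.
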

Note that such a min-cut bound determines achievability - without necessarily stating any specific way to actually do so.
Furthermore, it is required that
\[
\sum_{i=0}^{k-1}\min\{\alpha,(d-i)\beta\} \geq B
\]
for regeneration to be possible.
Two sub-families of regenerating codes have consequently emerged \cite{WD-isit09} - coined as \emph{functional}, respectively \emph{exact}, to provide actual coding strategies.
Functional repair strategies rely on random network coding arguments, and while they regenerate lost redundancy, the data stored by new nodes is not `bit-by-bit' identical to the encoded block that previously existed: it is enough that it allows the retrieval of the stored data. In contrast, exact repair leads to regeneration of bit-by-bit identical encoded block as was lost. Exact regeneration is preferable since it translates to simplicity in system design and management. A more detailed
comparison between exact and functional repair can be found in \cite{biersackRGCstudy}.

The original bound reported in Proposition \ref{prop:rgc} was derived assuming that only one encoded data block for a single node is being regenerated. However, this is not a realistic assumption to build practical networked storage systems. In highly dynamic scenarios, which is typical in peer-to-peer environments, but also may happen in more static (data-center like) environments due to correlated failures, it may be necessary to regenerate data for multiple nodes. Naive strategies would include regenerations sequentially, or in parallel, but independently of each other.

In \cite{multiRGC}, the above framework has been extended for multiple new nodes to carry out regeneration by not only downloading data from (old) live nodes, but also by additionally collaborating among each other under some specific settings. A more generalized result is provided in \cite{Shum-ICC} (and also, independently in \cite{beyondRGC}).

The regeneration process is carried out in two phases, a download
phase during which a batch of $t$ newcomers download data from any $d$
live nodes each, and a collaborative phase, where each newcomer
shares some of its data to help the $t-1$ other new nodes. Such a two
phase regeneration involving collaboration among new nodes can lead
to reduction in the overall bandwidth usage for the regenerations.

Under such a setting, a more general min-cut bound is derived. In
the following, $\beta'$ represents the bandwidth during the
collaborative phase, i.e., each new node sends (and also receives)
$\beta'$ data to (from) each other new node. Consider that the data collector contacts $k$ nodes for reconstructing the data, such that the contacted nodes can be arranged in $g$ groups of sizes $u_i$ where $u_0+\ldots+u_{g-1}=k$, where each such group represents a generation of $t$ nodes which had joined the system together and carried out the regeneration collaboratively.

\begin{prop}\label{prop:collabrgc}\cite{beyondRGC,Shum-ICC}
A min-cut bound of an information flow between the source and a data collector is
\[
\mbox{mincut}(\soc,\dc)\geq \sum_{i=0}^{g-1} u_i\min\{\alpha,
(d-\sum_{j=0}^{i-1}u_j)\beta+(t-u_i)\beta'\}
\]
where $k=\sum_{i=0}^{g-1}u_i$ with $1\leq u_i \leq t$,
\end{prop}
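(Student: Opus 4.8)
\noindent\emph{Proof approach.}
The plan is to extend the information-flow-graph argument of~\cite{WDR} behind Proposition~\ref{prop:rgc} (which is the $t=1$ case) to the two-phase collaborative model, and then to read off the bound by accounting for cut edges generation by generation. First I would build the acyclic flow network $G$: a source $\soc$; each storage node modelled as $\xin\!\to\!\xcor\!\to\!\xout$ with the storage constraint carried by a capacity-$\alpha$ internal edge and the other internal edge of capacity $\infty$; the $n$ original nodes fed from $\soc$ by capacity-$\infty$ edges; for each regenerated generation, $d$ capacity-$\beta$ edges into every newcomer's $\xin$ from the $\xout$ of its $d$ helpers, plus, for the collaborative phase, a capacity-$\beta'$ edge from the $\xcor$ of each newcomer of that generation to each of the $t-1$ other newcomers of the same generation; finally the collector $\dc$, fed by capacity-$\infty$ edges from the $\xout$ of the $k$ contacted nodes. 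As in~\cite{WDR}, an object of size $B$ is recoverable only if $\mbox{mincut}(\soc,\dc)\ge B$ for every admissible $\dc$, so it suffices to lower bound $\mbox{mincut}(\soc,\dc)$ for one collector whose $k$ contacted nodes split into $g$ generations of sizes $u_0,\dots,u_{g-1}$.

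Next I would take an arbitrary finite-capacity $\soc$--$\dc$ cut $(U,\bar U)$ with $\soc\in U$, $\dc\in\bar U$, and bound it below. The capacity-$\infty$ collector edges force the $\xout$ of every contacted node into $\bar U$. Fix a topological order of $G$ and process the contacted nodes in it, generation by generation from oldest to newest; write $s_i=\sum_{j=0}^{i-1}u_j$. For a contacted newcomer $v$ in generation $i$: if $\xin^v\in U$ then the path $\xin^v\!\to\!\xcor^v\!\to\!\xout^v$ leaves $U$, so its capacity-$\alpha$ edge is in the cut and $v$ is charged $\alpha$. Otherwise $\xin^v\in\bar U$ and every edge into $\xin^v$ from a vertex of $U$ is in the cut: among $v$'s $d$ download edges at most $s_i$ originate at the $\xout$ of an earlier-processed contacted node, so at least $d-s_i$ of them cross the cut and contribute $\beta$ each; and among the $t-1$ collaboration edges into $v$, the $t-u_i$ coming from non-contacted newcomers of generation $i$ (whose $\xcor$ we may take to lie in $U$) cross the cut and contribute $\beta'$ each. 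Hence $v$ is charged at least $\min\{\alpha,(d-s_i)\beta+(t-u_i)\beta'\}$, and distinct contacted nodes are charged disjoint edge sets.

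Summing the charges over the $u_i$ contacted nodes of each generation and over $i=0,\dots,g-1$ gives
\[
\mbox{mincut}(\soc,\dc)\ \ge\ \sum_{i=0}^{g-1}u_i\min\{\alpha,(d-\sum_{j=0}^{i-1}u_j)\beta+(t-u_i)\beta'\},
\]
and since the cut was arbitrary the bound holds for the minimum cut, which is the claim.

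The arithmetic of the last step is routine; the work is in making the cut estimate airtight, and three points should be handled exactly as in the construction of~\cite{beyondRGC,Shum-ICC}. First, the $\xin\!\to\!\xcor\!\to\!\xout$ gadget and the insertion point of the $\beta'$ edges must be arranged so that ``$\xin^v\in U$'' genuinely forces a capacity-$\alpha$ edge into the cut and so that collaboration edges sit before the $\alpha$ bottleneck. Second, the claim ``at most $s_i$ download edges of $v$ fail to be cut'' needs justification because a helper of $v$ need not itself be contacted yet may still have $\xout\in\bar U$; one checks that any such ancestor only \emph{adds} to the cut (it forces either an extra capacity-$\alpha$ edge or, tracing further back, more capacity-$\beta$ edges, the recursion terminating since the capacity-$\infty$ edges from $\soc$ forbid an original node's $\xin$ from lying in $\bar U$), and one fixes a topological order charging every cut edge to one node. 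Third, one must justify that the $t-u_i$ non-contacted newcomers of generation $i$ may be assumed to have $\xcor\in U$, e.g.\ by taking the cut inclusion-minimal on $\bar U$ or by a local exchange when one of them lies in $\bar U$. Point two---keeping the accounting exact while absorbing the extra ancestor contributions---is the step I expect to be the main obstacle.
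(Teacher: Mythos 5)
Your proposal is correct and follows essentially the same route as the paper: Proposition~\ref{prop:collabrgc} is imported from \cite{beyondRGC,Shum-ICC} without proof, but the paper's own proofs of its generalizations (Propositions 3 and 4) use exactly this information-flow-graph cut argument, charging each contacted newcomer of generation $i$ either $\alpha$ or the capacities of its uncut download and collaboration edges and summing over generations. The only cosmetic difference is that the paper parametrizes each generation by the number $m$ of contacted newcomers on the source side and derives the $u_i\min\{\cdot\}$ form by a concavity-in-$m$ endpoint argument, whereas you charge each node $\min\{\alpha,(d-\sum_{j<i}u_j)\beta+(t-u_i)\beta'\}$ directly; the subtleties you flag (non-contacted ancestors in $\bar U$, placement of the $\beta'$ edges relative to the $\alpha$ bottleneck) are glossed over in the paper's arguments as well.
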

and as above, we need
\[
\sum_{i=0}^{g-1} u_i\min\{\alpha,
(d-\sum_{j=0}^{i-1}u_j)\beta+(t-u_i)\beta'\}\geq B
\]
for regeneration to be possible.
When $t=1$, we get that $u_i=1$, thus $g=k$ and the more general bound matches the one given in
Proposition \ref{prop:rgc}.

As pointed out in \cite{beyondRGC}, two extreme cases can be identified. First, if there is no contribution in
$\beta'$, then the highest contribution comes from $\beta$, that is $u_i=t$ and $g=k/t$, and the min-cut bound becomes
\begin{equation}\label{eq:simpmincut1}
\mbox{mincut}(\soc,\dc)\geq
\sum_{i=0}^{k/t-1}t \min\{\alpha,(d-it)\beta\}.
\end{equation}
Conversely, the highest contribution from $\beta'$ comes when
$\beta$ is minimized, which occurs when $u_i=1$ for all $i$ and $g=k$. Then the min-cut bound simplifies to
\begin{equation}\label{eq:simpmincut2}
\mbox{mincut}(\soc,\dc)\geq
\sum_{i=0}^{k-1} \min\{\alpha,
(d-i)\beta+(t-1)\beta'\}.
\end{equation}

The minimum possible amount of data that can be stored at a node is $B/k$, since the data collector must be able to
retrieve the object out of any $k$ nodes.
Codes using the lowest amount of storage $\alpha=B/k$ are said to satisfy the {\em minimum storage regeneration (MSR) point},
and using (\ref{eq:simpmincut1}) and (\ref{eq:simpmincut2}) are shown to be characterized by \cite{beyondRGC}
\begin{equation}\label{eq:mscr}
\alpha=\frac{B}{k},~\beta=\beta'=\frac{B}{k}\frac{1}{d-k+t}
\end{equation}
while codes requiring the minimum bandwidth for regeneration similarly satisfy
\begin{equation}\label{eq:mbcr-alpha}
\alpha=\frac{B}{k}\frac{2d+t-1}{2d-k+t}
\end{equation}
and
\begin{equation}\label{eq:mbcr-beta}
\beta=\frac{B}{k}\frac{2}{2d-k+t},~\beta'=\frac{B}{k}\frac{1}{2d-k+t},
\end{equation}
a point called the {\em minimum bandwidth regeneration (MBR) point}.

\begin{figure}[htbp]
  \begin{center}
    \includegraphics[scale=0.6]{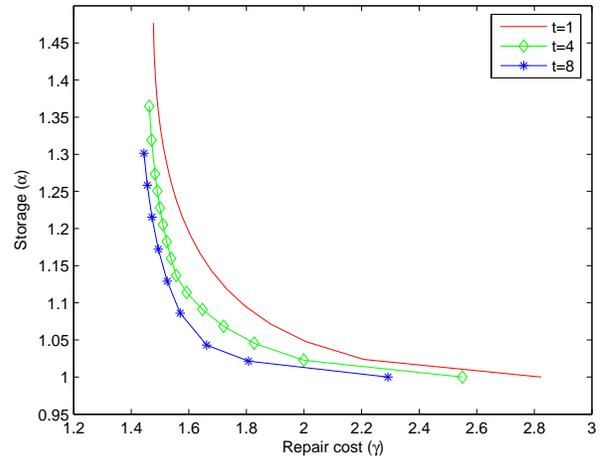}
    \caption{The storage bandwidth (per repair) trade-off curve using regenerating codes with collaboration for $t =$ 1, 4, 8. This plot (and all others plots in this paper) has been generated using linear non-convex optimization numerically. The values have been normalized by $B/k$.}
  \label{fig:CollabRGC}
  \end{center}
    \vspace{-0.5cm}
\end{figure}

The benefit of collaborative regenerating codes with respect to standard regenerating codes (that is, with no collaboration phase) is illustrated in Fig.~\ref{fig:CollabRGC}, where we set $d=48$ and $k=32$. Trade-off curves between the storage cost $\alpha$ on the \emph{y-axis} against the bandwidth cost per repair on the \emph{x-axis}, determined in (\ref{eq:gammaCollabRGC}) and denoted as $\gamma$ are shown for different scenarios. For collaborative regenerating codes, the total bandwidth for one node to be repaired is the data downloaded from live nodes, that is $\beta$ from $d$ nodes, and the data exchanged among newcomer nodes during collaboration, which is $\beta'$ from $t-1$ nodes, for a total of
\begin{equation}\label{eq:gammaCollabRGC}
\gamma=d\beta+(t-1)\beta'.
\end{equation}
If no collaboration is done, then $t=1$ and $\gamma=d\beta$. The trade-off curve for $t=1$ in Fig.~\ref{fig:CollabRGC} thus
corresponds to standard (independent) regenerations. Larger value of $t$, implying multiple repairs being carried out collaboratively, allows the storage system to operate using both lower storage and bandwidth costs.

Though several works discussed min-cut bounds for collaborative regeneration codes, we are aware of only one family of collaborative regenerating codes \cite{Shum-ICC}, which provides exact repair for $d=k$ at MSR point. It is noted in \cite{beyondRGC} that for $d=k$, the repair cost is the same as for erasure correcting codes
using delayed repair.

%
%

\section{Byzantine faults model}

The regeneration process can be dramatically affected if some of the live nodes behave in a Byzantine manner, that is, act in a manner different than as expected by the regeneration process. So far, and to the best of our knowledge, \cite{PER-arxiv} is the only work looking at security issues related to regenerating codes. Besides considering a passive adversary who eavesdrops, it also looks at malicious behaviors affecting data integrity at nodes during the regeneration process, but all the considered scenarios assume a single regeneration at a time, rather than the more general problem of multiple simultaneous regenerations. This naturally excludes the complications arising due to the collaboration phase, where a single Byzantine node can potentially contaminate all the other regenerating nodes simultaneously.

In this paper, we consider two types of Byzantine adversaries. A relatively benign form of faulty behavior is when a live node does not provide any data for the regeneration process. We will refer to such Byzantine nodes as \emph{selfish nodes}. Note that we distinguish a selfish node from an unavailable (offline) node in that a selfish node is expected to continue to respond to a data collector trying to recreate the object. If a node refuses to help for both regeneration and also data access, then it can be treated analogously as any other offline node. Such a selfish behavior may arise due to various reasons: the node may be overloaded with other tasks, or there may be temporary problems in the communication link - so that the node can not respond in a timely manner to meaningfully contribute to the regeneration process. No such time-bounded response is assumed for data reconstruction by a data collector. Alternatively, a node participating in a peer-to-peer back-up system may be comfortable with responding to data access requests which are relatively infrequent, and hence less taxing on its bandwidth resources, than regeneration process which could be frequent due to system churn, prompting the node to act selfishly for the regeneration process.

A more malign faulty behavior is when wrong data is sent by a node. Such a behavior even by a single node, if unchecked, may corrupt many nodes downstream. We will refer to such nodes as \emph{polluting nodes}. Rapid propagation of pollution is an inherent and general weakness of network coding, on which rely regenerating codes, making the system extremely vulnerable in the presence of even one or very few polluting nodes.

We note that, for collaborative regeneration, the Byzantine nodes may be among the originally online nodes when the regeneration process is initiated; or among the newly joining nodes, i.e., during the collaboration phase; or a mix of both. Clearly, the amount of data that can be stored reliably and needs to be transferred during regeneration will change under these adversarial constraints, and in particular, so will the trade-off between the storage $\alpha$ and the bandwidth $\beta,\beta'$ as described by the min-cut bounds in Propositions 1 \& 2.

In the spirit of \cite{PER-arxiv}, we consider the resiliency capacity of the distributed storage system as the maximum amount of data that can be stored reliably over the network in the presence of malign nodes, and made available to a legitimate data collector. More precisely, we will focus on the resiliency capacity $C_{r,s}(\alpha,\beta,\beta')$ in the presence of selfish nodes, and $C_{r,p}(\alpha,\beta,\beta')$ when polluting nodes are active.

%
%
%

\section{Min-cut bounds under Byzantine failures}

We will analyze how the storage bandwidth trade-off given in Proposition \ref{prop:collabrgc} is affected in presence of the various Byzantine nodes. We study the general case of regenerating codes, which studies multiple simultaneous regenerations, and with collaboration among the new nodes.

We determine upper-bounds, which means that it is not possible to do any better than the constraints of the corresponding bounds. Note that this is in contrast to the Propositions \ref{prop:rgc} \& \ref{prop:collabrgc}, which determined achievability, though both bounds are derived through min-cut computations.
Since we derive upper bounds here, we can make simplifying (optimistic) assumptions, implying that, under more realistic assumptions and complicated derivations, it may be possible to determine tighter bounds.

\begin{figure}[htbp]
  \vspace{-0.4cm}
    \includegraphics[scale=0.36]{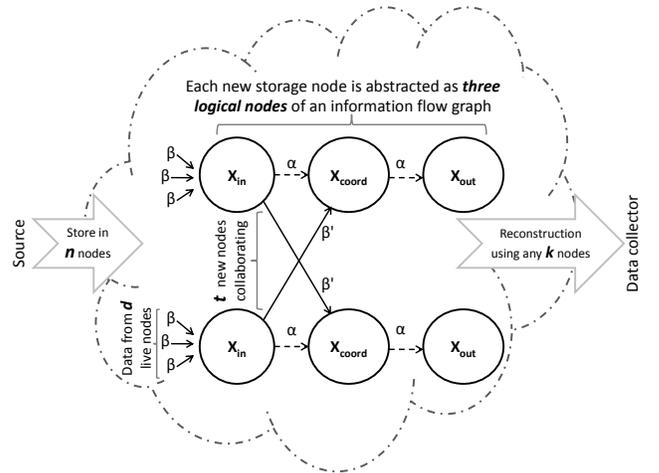}
    \caption{An abstract information flow graph model for the coordinated regeneration process.}
  \label{fig:RGCCloud}
  \vspace{-0.4cm}
\end{figure}

For determining a min-cut, we consider an information flow graph and
use the same abstraction as in \cite{beyondRGC}, which is illustrated in Fig.~
\ref{fig:RGCCloud}. Each new storage node is modeled using three
logical nodes in an information flow graph connecting the source to the
data collector, namely $\xin$, $\xcor$
and $\xout$. It is assumed that $t$ such new nodes carry out the
regeneration in a collaborative manner. $\xin$ represents the
aggregation of information by a new node from $d$ of the existing
live nodes, collecting $\beta$ data from each such contacted live
nodes. In the next (collaborative) phase, each new node provides
(and also obtains) $\beta'$ data from each of the other new nodes.
This collected data is then processed at individual nodes, and
finally they retain (store) $\alpha$ amount of data each. Thus to
each node corresponds a triple
$\xin\rightarrow\xcor\rightarrow\xout$ where both edges
$\xin\rightarrow\xcor$ and $\xcor\rightarrow\xout$ have a
capacity of $\alpha$. We will later (in Example \ref{ex:icc},
Section \ref{sec:ICCExactRGC}) elaborate a concrete example of
multiple regenerations with coordination.

\subsection{Effect of selfish nodes}
In the following we assume that the number of selfish nodes among the live (old) nodes is given by $\Lc_0$ in any generation, and $l_i\leq\lmax$ is the number of selfish nodes among the $i$th group of new comers, for some upper bound $\lmax$.
The total number of selfish nodes participating in the collaborative phase of regeneration over $g$ generations is $\Lc=\sum_{i=0}^{g-1}l_i$.

\begin{prop}
The resiliency capacity $C_{r,s}(\alpha,\beta,\beta')$ in the presence of selfish nodes is upper bounded by
\[
\begin{array}{c}
C_{r,s}(\alpha,\beta,\beta')\leq\\
\!\!\!\sum_{i=0}^{g-1} u_i\min\{\alpha,(d-\Lc_0-\sum_{j=0}^{i-1}u_j)\beta+(t-l_i-u_i)\beta' \}.
\end{array}
\]
\end{prop}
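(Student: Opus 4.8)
The plan is to adapt the min-cut argument behind Proposition~\ref{prop:collabrgc} to the information flow graph of Fig.~\ref{fig:RGCCloud}: I will exhibit a single cut of that graph whose value equals the claimed right-hand side. Since the amount of data a legitimate data collector can recover is at most the min-cut of the flow graph (max-flow/min-cut), and the min-cut is at most the value of any particular cut, this delivers the stated upper bound on $C_{r,s}$. The only behavioural change with respect to the benign setting is that a selfish node still stores $\alpha$ and still answers $\dc$, but transmits nothing during regeneration; hence in the flow graph its outgoing regeneration edges (the $\beta$ edges it would feed into some $\xin$, or the $\beta'$ edges it would feed into some $\xcor$ of its own generation) carry no flow and may simply be removed.

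First I would fix a data collector $\dc$ that contacts $k$ nodes arranged, as in Proposition~\ref{prop:collabrgc}, into $g$ groups of sizes $u_0,\dots,u_{g-1}$, one group per generation. Because we are after an upper bound, I then place the selfish nodes in the way that shrinks the cut the most: in each generation I let all $\Lc_0$ selfish old nodes lie among the $d$ nodes that the newcomers of that generation download from, and I let the $l_i$ selfish newcomers of generation $i$ lie outside the $u_i$ nodes that $\dc$ eventually contacts, so that the selfish set and the contacted set within generation $i$ are disjoint (possible whenever $l_i+u_i\le t$). In the same spirit I treat the number of selfish old nodes seen by every generation as the single constant $\Lc_0$ rather than letting selfish newcomers of earlier generations accumulate into the live-node pool of later ones; both are among the optimistic simplifications the section announces, and each only weakens the resulting cut, hence keeps it a valid upper bound.

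Then I would build the cut exactly as in \cite{beyondRGC}: for the $i$-th group, each of its $u_i$ contacted nodes is cut either at the edge $\xcor\rightarrow\xout$ (contributing $\alpha$), or just upstream of $\xin$ and $\xcor$. In the latter case the edges that must be paid for are the $\beta$ edges coming from old nodes that are neither selfish nor already on the source side of the cut --- there are $d-\Lc_0-\sum_{j=0}^{i-1}u_j$ of them --- together with the $\beta'$ collaboration edges coming from newcomers of generation $i$ that are neither selfish nor themselves among the $u_i$ contacted (same-side) nodes --- there are $t-l_i-u_i$ of them. This gives a per-node contribution of $\min\{\alpha,(d-\Lc_0-\sum_{j=0}^{i-1}u_j)\beta+(t-l_i-u_i)\beta'\}$; summing over the $u_i$ nodes of each group and over $i$ reproduces the announced expression, and since this holds for the worst-case data collector and selfish placement, $C_{r,s}(\alpha,\beta,\beta')$ cannot exceed it.

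The main obstacle is the bookkeeping of this cut rather than any deep idea: one must avoid double-counting the $\beta'$ edge joining two contacted nodes of the same generation (which is why the count is $t-u_i$, not $t-1$, before selfishness enters), avoid recounting the $\sum_{j<i}u_j$ old-node edges that earlier groups already placed on the source side, and argue that the chosen placement of selfish nodes is indeed the one minimizing the cut. A consistency check I would include: setting $\Lc_0=0$ and every $l_i=0$ collapses the bound back to that of Proposition~\ref{prop:collabrgc}, as it must.
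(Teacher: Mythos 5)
Your proposal is correct and follows essentially the same route as the paper: a cut of the information flow graph in which, for each generation $i$, the $u_i$ contacted newcomers contribute either $\alpha$ or the surviving upstream edges $(d-\Lc_0-\sum_{j<i}u_j)\beta+(t-l_i-u_i)\beta'$, with the $\Lc_0$ selfish live nodes and $l_i$ selfish newcomers placed adversarially so that their $\beta$ and $\beta'$ edges drop out of the cut. The only (cosmetic) difference is that you exhibit the cheaper endpoint cut directly, whereas the paper parametrizes the cut by the number $m$ of contacted nodes on the source side and uses concavity in $m$ to conclude that the minimum of $c_i(m)$ is attained at an endpoint, yielding the same $u_i\min\{\cdot,\cdot\}$ term.
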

\label{prop:selfishcollabRGC}
\begin{proof}
Consider a cut of the network, between the set $U$ which contains the source $\soc$,
and its complementary set $\bar{U}$ which contains the data collector $\dc$.
The information flow goes from the source to the data collector, through
$\xin \rightarrow \xcor \rightarrow \xout$, where both edges are
assumed to have capacity $\alpha$.
Let $u_0$ be the number of new comers contacted by the data collector in the first group of $t$
new comers, with $m$ of them in $U$,
and $u_0-m$ of the others in $\bar{U}$. Take a first node,
if it belongs to $U$, then it contributes to $\alpha$ (if either $\xcor\in U$ or $\xcor\in \bar{U}$) to the cut,
thus the $m$ nodes in $U$ contribute to a total of $m\alpha$ to the cut.

Consider now the $u_0-m$ nodes in $\bar{U}$. There are two contributions to the cut,
coming from either $\xin$ or $\xcor$. The $\xin$ part downloads from live nodes, of which, there are $\Lc_0$ selfish nodes. In an adversarial scenario, the first $\Lc_0$ nodes
contacted may all be selfish, and as a result, the contribution to the cut would be
$(d-\Lc_0)\beta$. Now for $\xcor$, it contacts $t-1$ other new comers, $u_0-m$ could already be
in $\bar{U}$ (including itself), and $l_0$ could be selfish, thus the cut is increased of $(t-(u_0-m)-l_0)\beta'$, for a total of
\begin{eqnarray*}
c_0(m) & \geq & m\alpha+(u_0-m)[(d-\Lc_0)\beta+ \\
       &      & (t-u_0+m-l_0)\beta']\\
       & \geq & u_0 \min\{\alpha,(d-\Lc_0)\beta+(t-l_0-u_1)\beta'\}
\end{eqnarray*}
by a concavity argument: since we have a function concave in $m$, it takes values always greater than in its minima
which are on the domain boundary, namely in $m=0$ (for which we have $u_0[(d-\Lc_0)\beta+(t-l_0-u_1)\beta']$) and in
$m=u_0$ (for which we have $u_0\alpha$). Thus the function is always greater than in the value it takes at the smallest of its minima.

Analogously, for the second group $u_1$, taking into account that $\xin$ might contact among the live nodes
those who joined in the first group of $u_0$ nodes, we get
\begin{eqnarray*}
c_1(m) & \geq & m\alpha+(u_1-m)[(d-\Lc_0-u_0)\beta+ \\
       &      & (t-u_1+m-l_1)\beta']\\
       & \geq & u_1 \min\{\alpha,(d-\Lc_0-u_0)\beta+ (t-l_1-u_1)\beta'\}.
\end{eqnarray*}
By iteration and by summing over all the groups $u_0,\ldots,u_{g-1}$ such that $u_0+\ldots+u_{g-1}=k$ we get

\begin{eqnarray}
\sum_{i=0}^{g-1} u_i\min\{\alpha,(d-\Lc_0-\sum_{j=0}^{i-1}u_j)\beta+(t-l_i-u_i)\beta' \}.
\label{eq:selfishtradeoff}
\end{eqnarray}

\end{proof}

As explained in Section \ref{sec:RGC}, one of the two extremes in the storage-bandwidth trade-off is the minimum storage regeneration (MSR) point, which corresponds to the minimum amount of storage that is needed at each node to support data reconstruction by data collector by contacting $k$ nodes. The minimum storage point continues to be $\alpha=B/k$ under our selfishness model.

Since Proposition \ref{prop:selfishcollabRGC} is true for all possible values of $u_i$, it also holds particularly when $u_i=t-l_i$ for all $i$. Such a choice of $u_i$s eliminates the $\beta'$ component from the min-cut equation, allowing us to bound the value of $\beta$ at the MSR point as follows.

Recall that $\sum_{i=0}^{g-1}u_i=k$, hence, $gt-\Lc=k$, so when $u_i=t-l_i$ we have
\[
g=\frac{k+\Lc}{t}.
\]
For data reconstruction, we need $B \leq C_{r,s}(\alpha,\beta,\beta')$, hence
\[
B \leq \sum_{i=0}^{\frac{k+\Lc}{t}-1}(t-l_i)\min\{\alpha,(d-\Lc_0-\sum_{j=0}^{i-1}(t-l_j))\beta\},
\]
where $\alpha=B/k$.
Note that the expression on the right hand side is less than or equal to $\frac{B}{k}\sum_{i=0}^{\frac{k+\Lc}{t}-1}(t-l_i)$, which is however equal to $B$ (the same as the expression on the left hand side).

Thus, for every $i$
\[
(d-\Lc_0-\sum_{j=0}^{i-1}(t-l_j))\beta \geq B/k.
\]
Indeed, we know that having all the $min$ terms equal to $B/k$ gives $B$, thus it cannot be that one of the terms is strictly
smaller than $B/k$.
The expression on the left hand side is the smallest when $i=\frac{k+\Lc}{t}-1$, which in turn means
\[
(d-\Lc_0-\sum_{j=0}^{\frac{k+\Lc}{t}-2}(t-l_j))\beta \geq B/k.
\]
Consequently, the smallest feasible value for $\beta$ (which in turn leads to the smallest usage of bandwidth for regeneration) is
\begin{equation}\label{eq:betaMSRgeneral}
\frac{B/k}{(d-\Lc_0)-k+(t-l_{(k+\Lc)/t-1})}.
\end{equation}
This suggests that the bandwidth needed for download from the live nodes only depends on the last phase of regeneration,
where $d-\Lc_0$ and instead $d$ where contacted, and likewise, only $(t-l_{(k+\Lc)/t-1})$ nodes instead of $t-1$ actually participated in the collaborative phase.
We can thus conclude that
\begin{equation}\label{eq:betaMSR}
\frac{B/k}{(d-\Lc_0)-k+t} \leq \beta \leq \frac{B/k}{(d-\Lc_0)-k+(t-\lmax)}.
\end{equation}

We will like to specifically emphasize that the above bounds on $\beta$ are not to be confused with the range of values $\beta$ can take on the trade-off curve. Instead, what this result implies is that, even for the minimum storage point, the minimum feasible $\beta$ can be anywhere within this range, and depends on the precise number of selfish nodes involved in the collaborative phase, as noted in (\ref{eq:betaMSRgeneral}).

To compute $\beta'$, we consider the other extreme regime, where $u_i=1$ for all $i$, and thus $g=k$ (recall that we still
have $\alpha=B/k$). Then
\[
B \leq \sum_{i=0}^{k-1}\min\{B/k,(d-\Lc_0-i)\beta+(t-l_i-1)\beta'\}.
\]
Similarly as the computations done for $\beta$, since
\[
\min\{B/k,(d-\Lc_0-i)\beta+(t-l_i-1)\beta'\}\leq B/k
\]
we have that
\[
\sum_{i=0}^{k-1}\min\{B/k,(d-\Lc_0-i)\beta+(t-l_i-1)\beta'\}\leq B,
\]
and thus equality holds:
\[
\sum_{i=0}^{k-1}\min\{B/k,(d-\Lc_0-i)\beta+(t-l_i-1)\beta'\} = B.
\]
We observe that this is a sum of $k$ terms, so if any of the \emph{min} terms were smaller than $B/k$, there would be a contradiction. Thus, it must be that $(d-\Lc_0-i)\beta+(t-l_i-1)\beta' \geq B/k$ for all $i=0,...,k-1$. The smallest feasible $\beta'$ then corresponds to $i=k-1$, and we obtain that
\begin{equation}\label{eq:betaMSRprimeconstraint}
(d-\Lc_0-(k-1))\beta+(t-l_{k-1}-1))\beta'=B/k.
\end{equation}
This simplifies to
\[
\beta'= \frac{B/k-(d-\Lc_0-(k-1))\beta}{t-l_{k-1}-1}.
\]
Using (\ref{eq:betaMSR}), we determine that
\begin{equation}\label{eq:betaprimeMSRlb}
\frac{(B/k)(t-\lmax-1)}{(d-\Lc_0-k+t-\lmax)(t-1)} \leq \beta'
\end{equation}
and
\begin{equation}\label{eq:betaprimeMSRub}
\beta'
\leq \frac{(B/k)(t-1)}{(d-\Lc_0-k+t)(t-\lmax-1)}.
\end{equation}

With suitable choices of parameters $\Lc_0,\lmax$, the results from Proposition \ref{prop:rgc} on standard (independent) regenerations and Proposition \ref{prop:collabrgc} corresponding to collaborative regeneration can be deduced (not surprisingly) from the results of our generalization.

\begin{itemize}
\item
If $\lmax=0$, then there is no selfish node in the collaborative phase, only $\Lc_0$ live nodes might be selfish,
and thus the bounds described in (\ref{eq:betaMSR}) and (\ref{eq:betaprimeMSRlb})-(\ref{eq:betaprimeMSRub}) give
\[
\alpha=\frac{B}{k},~\beta=\beta'=\frac{B}{k}\frac{1}{d-\Lc_0-k+t}.
\]
Note that this is analogous to using less (i.e., $d-\Lc_0$ instead of $d$) nodes from among the live nodes for regeneration, and the specific result from Proposition \ref{prop:collabrgc} can be obtained by furthermore setting $\Lc_0=0$.
\item
If $\lmax$ takes its maximum value, that is $\lmax=t-1$, that would imply that there is no collaboration. The upper bound in (\ref{eq:betaMSR}) is then satisfied only corresponding to $t=1$, giving $\beta=\frac{B/k}{(d-\Lc_0)-k+1}$ which is analogous to the result from Proposition \ref{prop:rgc} for standard independent regeneration when $\Lc_0=0$. Also, $\lmax=t-1$ implies that the coefficient of $\beta'$ in (\ref{eq:betaMSRprimeconstraint}) is zero, and hence there is no information from the collaborative flows, and thus there is no practical meaning in discussing about $\beta'$.
\end{itemize}
These extreme cases are essentially a sanity check of our generalization, and the drawn conclusions are on expected lines. Similar conclusions can also be drawn about the other extreme point (minimum bandwidth regeneration) in the trade-off curve. Unlike the extreme points however, the intermediate points in the trade-off curve are not as amenable to closed form analysis, and comprise of an interesting regime, which we study using numerical optimization and discuss later in Section \ref{subsec:interp}.

\begin{figure*}[htbp]
    \begin{center}
    \subfigure[\label{fig:CollabSelRGCTradeoff}Trade-off under selfish attacks: $\Lc_0=1, l_i\leq1, \Lc=16/32$]{\includegraphics[scale=0.6]{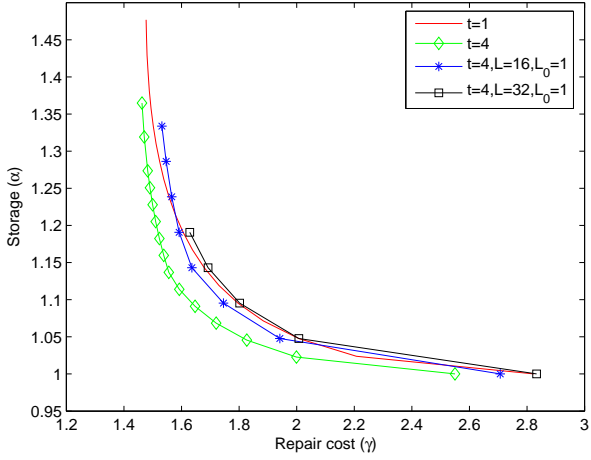}}\hspace*{0.1cm}
    \subfigure[\label{fig:CollabPolRGCTradeoff}Trade-off under pollution attacks: $\Bc_0=1, b_i\leq1, \Bc=16/32$]{\includegraphics[scale=0.6]{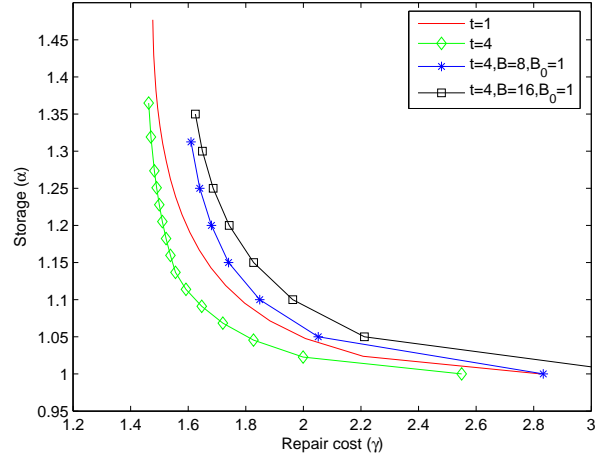}}
  \end{center}
  \caption{Storage-bandwidth tradeoff curves (normalized with $B/k$) using collaborative regenerating codes under Byzantine (selfish and pollution, respectively) attacks, determined by considering $g=32$ generations or regenerations where $t$ new nodes join and collaborate in each generation.}
  \label{fig:CollabRGCUnderAttacks}
  \vspace{-0.75cm}
\end{figure*}

\subsection{Effect of polluting nodes}

We now consider a worse case where the nodes are not selfish anymore, but are maliciously sending wrong data.
We assume that there are $\Bc_0$ polluting nodes among the live nodes in any generation of regeneration, while
$b_i\leq \bmax$ is the number of polluting nodes among the $i$th group of newcomers, with
$\Bc=\sum_{i=0}^{g-1}b_i$.

\begin{prop}
The resiliency capacity $C_{r,p}(\alpha,\beta,\beta')$ in the presence of polluting nodes is upper bounded by
\[
\begin{array}{c}
C_{r,p}(\alpha,\beta,\beta')\leq \\
\sum_{i=0}^g u_i\min\{\alpha,(d-2\Bc_0-\sum_{j=0}^{i-1}u_j)\beta+(t-2b_i-u_i)\beta' \}.
\end{array}
\]
\end{prop}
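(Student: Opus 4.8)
The plan is to mimic the min-cut argument of Proposition~\ref{prop:selfishcollabRGC} exactly, replacing the single-count penalty for a selfish node by a double-count penalty for a polluting node. As in that proof, I would fix a cut $(U,\bar U)$ with $\soc\in U$ and $\dc\in\bar U$, and account for the contribution of each group $u_i$ of contacted newcomers separately; a newcomer whose $\xcor$ (or $\xout$) lies in $U$ contributes $\alpha$, so if $m$ of the $u_i$ newcomers of group $i$ are on the source side, they contribute $m\alpha$. The work is then to bound the contribution of the remaining $u_i-m$ newcomers that lie in $\bar U$, through their $\xin$ and $\xcor$ edges.

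The key change is the following observation about why pollution costs twice as much as selfishness. For the $\xin$ part, a newcomer downloads $\beta$ from $d$ live nodes; among them up to $\Bc_0$ may be polluters. A selfish node merely withholds its $\beta$, so one subtracts $\Lc_0\beta$; a polluter instead injects erroneous data, and to be able to identify and discard a polluted incoming symbol one must also sacrifice a matching amount of good data (this is the standard error-versus-erasure factor of two, as in the Singleton/Reed--Solomon bound and as used for regenerating codes in \cite{PER-arxiv}). Hence the usable incoming flow is $(d-2\Bc_0-\sum_{j<i}u_j)\beta$, where as before the $\sum_{j<i}u_j$ term accounts for the possibility that some of the $d$ contacted live nodes were themselves newcomers from earlier groups. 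Likewise, in the collaborative phase a newcomer exchanges $\beta'$ with the other $t-1$ newcomers; subtracting the $u_i-m$ already in $\bar U$ (including itself) and charging $2b_i$ for the $b_i$ polluters among the collaborators gives an effective collaborative flow $(t-(u_i-m)-2b_i)\beta'$. So the contribution of group $i$ is at least
\[
c_i(m)\geq m\alpha+(u_i-m)\Bigl[(d-2\Bc_0-\textstyle\sum_{j=0}^{i-1}u_j)\beta+(t-u_i+m-2b_i)\beta'\Bigr].
\]

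I would then invoke the same concavity argument: $c_i(m)$ is concave in $m$ on $[0,u_i]$, so its minimum is attained at an endpoint, giving $c_i(m)\geq u_i\min\{\alpha,\,(d-2\Bc_0-\sum_{j=0}^{i-1}u_j)\beta+(t-2b_i-u_i)\beta'\}$. Summing over $i=0,\dots,g-1$ with $\sum_i u_i=k$ yields the claimed upper bound on $C_{r,p}(\alpha,\beta,\beta')$, since the min-cut over all cuts upper-bounds the reliably deliverable information (and the stated bound is, per the paper's convention, an upper bound rather than an achievability result).

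The main obstacle is justifying the factor-$2$ charge cleanly at the information-flow level. In the selfish case, removing a node's edge from the cut is transparent; in the polluting case one must argue that, to a data collector that does not know which contributors are malicious, each unit of adversarial flow not only fails to help but forces the discarding of an equal unit of honest flow, so the \emph{effective} min-cut through a polluted edge set shrinks by $2\Bc_0\beta$ (resp.\ $2b_i\beta'$) rather than $\Bc_0\beta$. I would handle this by the standard reduction: treat the adversary as choosing, after seeing the code, which $\Bc_0$ (resp.\ $b_i$) of the contributions to corrupt, and note that any two distinct corruption patterns must lead to distinguishable views at $\dc$, which forces a Hamming-type separation and hence the doubling; since we only want an upper bound, the optimistic simplifications (e.g.\ that the first $\Bc_0$ contacted live nodes are exactly the polluters, and that errors are maximally confusable) are all in our favour, so no tightness argument is needed. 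Everything else is a verbatim repetition of the bookkeeping in the proof of Proposition~\ref{prop:selfishcollabRGC}.
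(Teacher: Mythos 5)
Your proposal is correct and follows essentially the same route as the paper: the same cut decomposition with $m\alpha$ for source-side newcomers, the same effective-flow terms $(d-2\Bc_0-\sum_{j<i}u_j)\beta$ and $(t-u_i+m-2b_i)\beta'$ justified by the one-good-unit-per-corrupted-unit penalty (which the paper argues via majority voting over subsets of the linear equations and a citation to the Byzantine Generals problem), and the same concavity-at-the-endpoints step followed by summation over the $g$ groups. Your slightly more careful framing of the factor-of-two via distinguishability of corruption patterns is a reasonable substitute for the paper's informal argument, but it is the same idea, not a different proof.
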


\begin{proof}
Let $u_0$ be the number of new comers contacted by the data collector in the first group of $t$ comers, with $m$ of them in $U$,
and $u_0-m$ of the others in $\bar{U}$. As in the proof above for selfish nodes, the contribution to the bound is $m\alpha$.

We now look at the $u_0-m$ nodes in $\bar{U}$. There are two contributions to the cut,
coming from either $\xin$ or $\xcor$. Take the first node. The $\xin$ part downloads from live nodes. Among these live nodes, there could be $\Bc_0$ polluting nodes. It thus gets a system of linear equations\footnote{Since all the network coding results used rely on linear network coding, we use an argument valid in this setting.} from the $d$ nodes, solving which would provide the unknown pieces of the encoded blocks. In the standard regeneration scenario, the unknowns correspond to the different pieces stored in the node itself. In the collaborative regeneration scenario, the unknowns include a subset of its own pieces, and additional information which allows it to collaborate and help other nodes regenerate.

There might or might not be wrong equations, depending on whether any of the live Byzantine nodes are contacted, but to be able to detect them, a naive, brute-force technique will be to solve all possible valid combinations (determined by the number of unknowns) of the subsets of equations, and choose the solution which concurs in majority of these combinations. Independently of even if more elegant mechanisms are employed, in order to actually figure out which equations are valid, it requires $\Bc_0$ good equations to compensate for the $\Bc_0$ potentially wrong ones. This is a more fundamental limit in Byzantine settings \cite{ByzantineGeneralProblem}. Having said that, we will like to note that if some extrinsic information is available, better Byzantine fault tolerance may be achievable, which we will briefly discuss later in Section \ref{sec:practicalconsiderations}. However, the rest of this section continues the analysis under the assumption that no other extrinsic (side-channel) information is available.

Thus among the $d$ nodes contacted, those which will provide actual information to recover the lost data contribute to the cut by only
$(d-2\Bc_0)\beta$. Now for $\xcor$, it contacts $t-1$ other new comers (together with the edge $\xin\rightarrow\xcor$), $u_0-m$ could already be in $\bar{U}$, and $b_0$ could be bad, thus using the same argument as for $\Bc_0$, the contribution to the cut is
$(t-(u_0-m)-2b_0)\beta'$, for a total of
\begin{eqnarray*}
c_0(m) & \geq & m\alpha+(u_0-m)[(d-2\Bc_0)\beta+  \\
       &      & (t-u_0+m-2b_0)\beta']\\
       & \geq & u_0 \min\{\alpha,(d-2\Bc_0)\beta+(t-2b_0-u_1)\beta'\}.
\end{eqnarray*}
Likewise, for the second group $u_1$, we get
\begin{eqnarray*}
c_1(m) & \geq & m\alpha+(u_1-m)[(d-2\Bc_0-u_0)\beta+ \\
       &      & (t-u_1+m-2b_2)\beta']\\
       & \geq & u_1 \min\{\alpha,(d-2\Bc_0-u_0)\beta+(t-2b_2-u_1)\beta'\}.
\end{eqnarray*}
By iterating and by summing over all the groups $u_0,\ldots,u_{g-1}$ such that $u_0+\ldots+u_{g-1}=k$ we get
\[
\sum_{i=0}^g u_i\min\{\alpha,(d-2\Bc_0-\sum_{j=0}^{i-1}u_j)\beta+(t-2b_i-u_i)\beta' \}.
\]
\end{proof}


\subsection{Interpretation of the analysis}\label{subsec:interp}

We are interested in understanding the effects of both selfish and polluting nodes on the storage-bandwidth storage
trade-off curve. To do so, we numerically minimize the bandwidth under the respective min-cut constraints, and report some of our results in
Fig \ref{fig:CollabRGCUnderAttacks} corresponding to $d=48$, $k=32$, $t=4$, $g=32$ and compare how
the trade-off curves for different adversarial scenarios behave with respect to both collaborative and standard regenerating codes. 

In Fig.~\ref{fig:CollabRGCUnderAttacks} (a), selfish nodes are introduced in the network. We fix their maximum number among the live nodes to be only $\Lc_0=1$, and similarly $\lmax=1$ bounds the number of selfish nodes during collaboration. We consider two cases: when $\Lc=16$, that is all together 16 selfish nodes interfered during collaboration, and $\Lc=32$, that is one selfish node was present at each stage of the regeneration process. The optimization was performed by letting the
parameters $\beta,\beta'$ range through a range of values limited by the MSR and MBR points. Derivation for the MSR points were provided above, analogous formulas can be derived for the MBR points. We observe in Fig.\ref{fig:CollabSelRGCTradeoff} that when only half of the $g$ groups had selfish nodes ($\Lc=16$), the performance gets close to standard regenerating codes for a middle
range of repair cost values, while it is even worse for $\Lc=32$. For the later, the trade-off curve is worse, as
expected, since not only the collaboration phase is not contributing, but there is furthermore one selfish node in the live
nodes themselves.

In Fig.~\ref{fig:CollabPolRGCTradeoff}, the same setting is repeated, this time with polluting nodes. We see that even a small number of pollutant nodes in a collaborative regeneration group, or among the live nodes leads to drastic deterioration of what can be achieved using collaborative regeneration - casting some doubt on the efficacy of regenerating codes. In practice, some additional extrinsic mechanisms can alleviate the situation, which we will briefly mention in Section \ref{sec:practicalconsiderations}.

It is important to note that the plot for pollution attacks corresponds to the case where polluting nodes actually answer correctly to the
request of a data collector, meaning in particular that the minimum storage point is still $\alpha=B/k$. If it were not
the case, namely, the polluting nodes could give wrong data to the data collector, then the minimum storage point would
shift to $\alpha=B/(k-2\Bc_0)$. Further analysis is needed to comprehend the impact of the same, which we defer for future investigation.

%
%

\section{Exact Collaborative Regenerating Codes}
\label{sec:ICCExactRGC}

Currently, \cite{Shum-ICC} is, up to our knowledge, the only example of explicit codes for exact regeneration with collaboration, which works specifically for only the minimum storage regeneration point. We will first recall the construction, before considering it in the context of Byzantine adversaries. Note that in presence of Byzantine nodes, the number of nodes to be accessed might be different than what is used if there are no Byzantine nodes, for example as noted above, the minimum storage point is shifted from $B/k$ to $B/(k-2\Bc_0)$ where $\Bc_0$ is the number of Byzantine nodes that might send wrong information during data collection. Thus in what follows, we will retain $k$ to denote the number of nodes that the data collector accesses to retrieve the data stored, while $\kappa$ is used as the dimension of the codes used, such as for Reed-Solomon codes.

Consider the $(n,\kappa)$ Reed-Solomon code which is defined
over the finite field $\FF_q$ with
$q \geq n$ a power of a prime. Suppose that the object $\ov$ to be stored in $n$ nodes
can be written as
$\ov^T=(\ov_{11},\ldots,\ov_{1\kappa},\ldots,\ov_{t1},\ldots,\ov_{t\kappa})$ with
$\ov_{ij}$ in either $\FF_q$ or any finite field extension of $\FF_q$.
Note that this means that the object is cut into a number of pieces which depends on the number $t$ of (predetermined, expected) failures,\footnote{While such an assumption is somewhat restrictive, and design of more adaptive codes constitute an interesting future direction of research, we note that such codes can nevertheless be practically used either by over-estimating the number of faults (though this may not be optimal anymore), and also when failures are corrected lazily by deliberately postponing the repair process till a predetermined number of faults are accumulated.} with $k<n-t$. Furthermore, \cite{Shum-ICC} considers only the regime $k=d$.

The generator matrix $G$ of the Reed-Solomon code is a $\kappa\times n$ Vandermonde matrix whose columns are denoted by $\gv_i$, $i=1,\ldots,n$.
Every node is assumed to  know $G$.
Now create a matrix ${\bf O}$ as follows:
\[
{\bf O}=\left[
\begin{array}{ccc}
\ov_{11} & \ldots & \ov_{1\kappa} \\
         &        &          \\
\ov_{t1} & \ldots, & \ov_{t\kappa}
\end{array}
\right].
\]
The $i$th node stores ${\bf O}\gv_i$ where $\gv_i$ denotes the $i$th column of $G$, for example,
node 1 stores
\[
{\bf O}\gv_1.
\]

The $t$ rows represent what we will call the $t$ \emph{pieces} that
the corresponding node stores. That is, the \emph{encoded data
block} stored by each node comprises of \emph{multiple pieces}.
We will use the size of such a piece to define one unit of
data.

Any choice of $\kappa$ nodes $i_1,\ldots,i_\kappa$ clearly allows to
retrieve $\ov$ since we get
\[
{\bf O}[\gv_{i_1},\ldots,\gv_{i_\kappa}]
\]
where the matrix formed by any $\kappa$ columns of $G$ is a Vandermonde matrix and is thus invertible.

Let us now assume that $t$ nodes go offline, and $t$ new nodes join. Let
us call the $t$ new nodes as nodes 1 to $t$. The $i$th newcomer
will ask $(\ov_{i1},\ldots,\ov_{i\kappa})\gv_j$ for any choice of $\kappa$
nodes among the live nodes.

Each newcomer can invert the matrix formed by the columns of $G$,
and each decode $(\ov_{i1},\ldots,\ov_{i\kappa})$ respectively. Thus it can compute the
piece corresponding to its own first row, and also can compute
$(\ov_{i1},\ldots,\ov_{i\kappa})\gv_j$ and send it to the $j$th node,
which all will do similar computations and likewise deliver the missing pieces to the other newcomers, hence completing the collaborative regeneration process.

\begin{table*}
\center
\begin{minipage}{3in}
\begin{tabular}{c|c|c|c}
cost     & $l_0=\Lc_0=0$ & $l_0=1,\Lc_0=0$ & $l_0=0,\Lc_0=1$ \\
\hline
$\beta$  & 1/2    & 1       & $\beta_{av}$=3/4 \\
$\beta'$ & 1/2    & 0       & 1/2 \\
$\gamma$ & 2      & 3       & 2
\end{tabular}
\caption{selfish nodes: $\alpha=1$, $t=2$, $d=3$}\label{tab:sel-cost}
\end{minipage}
\hspace{1.5cm}
\begin{minipage}{3in}
\begin{tabular}{c|c|c|c}
cost     & $b_0=\Bc_0=0$ & $b_0=1,\Bc_0=0$ & $b_0=0,\Bc_0=1$ \\
\hline
$\beta$  & 1/2    & 1       & 1/2 \\
$\beta'$ & 1/2    & 0       & 1/2 \\
$\gamma$ & 2      & 3       & 3 ($d=5$)\\
\end{tabular}
\caption{polluting nodes: $\alpha=1$, $t=2$, $d=3$} \label{tab:pol-cost}
\end{minipage}
\vspace{-1cm}
\end{table*}

\begin{ex}\label{ex:icc}\rm
Consider the $(n,\kappa)=(7,3)$ Reed-Solomon code which is defined
over the finite field $\FF_8=\{0,1,w,w^2,w^3,w^4,w^5,w^6,w^7\}$ with
$w^3=w+1$. Suppose that the object $\ov$ is to be stored in $n=7$ nodes,
while expecting to deal with $t=2$ failures. First, represent the object as
$\ov^T=(\ov_{11},\ov_{12},\ov_{13},\ov_{21},\ov_{22},\ov_{23})$ with
$\ov_{ij}$ in either $\FF_8$ or any finite field extension of
$\FF_8$, say $\FF_q$. The generator matrix $G$ of the Reed-Solomon
code is given by:
\[
{\footnotesize \left[
\begin{array}{ccccccc}
1 & 1 & 1 & 1 & 1 & 1 & 1 \\
w & w^2 & 1+w & w+w^2 & 1+w+w^2 & 1+w^2 & 1 \\
w^2 & w+w^2 & 1+w^2 & w & 1+w & 1+w+w^2 & 1 \\
\end{array}
\right]}.
\]
Now create a matrix ${\bf O}$ as follows:
\[
{\bf O}=\left[
\begin{array}{ccc}
\ov_{11} & \ov_{12} & \ov_{13} \\
\ov_{21} & \ov_{22}, & \ov_{23}
\end{array}
\right].
\]
The $i$th node stores ${\bf O}\gv_i$ where $\gv_i$ denotes the $i$th column of $G$, for example,
node 1 stores
\[
{\bf O}\left[
\begin{array}{c}
1\\
w\\
w^2
\end{array}
\right]=
\left[
\begin{array}{ccc}
\ov_{11}+\ov_{12} w+\ov_{13}w^2.\\
\ov_{21}+\ov_{22}w+\ov_{23}w^2
\end{array}
\right].
\]

Thus each encoded data block comprises of two pieces of size one unit each in this
example, and the original object is of size six units,
and each encoded block is of size two units.

Any choice of $k=\kappa=3$ nodes $i_1,i_2,i_3$ clearly allows to
retrieve $\ov$ since we get
\[
{\bf O}[\gv_{i_1},\gv_{i_2},\gv_{i_3}]
\]
where the matrix formed by any 3 columns of $G$ is a Vandermonde matrix and is thus invertible.

Let us now assume that 2 nodes go offline, and 2 new nodes join. Let
us call the two new nodes as node 1 and node 2. The first new comer
will ask $(\ov_{11},\ov_{12},\ov_{13})\gv_i$ for any choice of 3
nodes among the 5 live nodes, while the second new comer will
similarly ask $(\ov_{21},\ov_{22},\ov_{23})\gv_i$ from any of the 5
live nodes.

Both new comers can invert the matrix formed by the columns of $G$,
and decode each respectively $(\ov_{11},\ov_{12},\ov_{13})$ and
$(\ov_{21},\ov_{22},\ov_{23})$. Now the first node can compute the
piece corresponding its own first row, and also can compute
$(\ov_{11},\ov_{12},\ov_{13})\gv_2$ and send it to the second node,
which likewise can compute $(\ov_{21},\ov_{22},\ov_{23})\gv_1$ and
send it to node 1, which completes the regeneration process.

Thus, overall, eight units of data transfer is needed in this
example, in order to replenish four units of lost data.
Note that if one node did regeneration of two pieces using six data transfer, it could
send the other node the other two pieces directly, needing again a
total of eight units of data transfer. As mentioned previously in Section \ref{sec:RGC}, when $d=k$
as is the case for this code construction, the repair cost is the same as that of erasure codes, though
with a better load balancing, as seen in this example.
\end{ex}

We use the above toy example to illustrate the effect of selfish/polluting nodes. We consider two scenarios with selfish nodes: (i) Consider that one of the two newcomer nodes does not agree to collaborate with the other. In this case, the other node has no choice than to download more data from the live nodes. Given that each node contacts $d=3$ live nodes, this means downloading 2 encoded pieces from each of the 3 nodes, for a total of 6 pieces. The cost of one repair is then $\gamma=3$. Note that all the bandwidth costs in this example are normalized with $B/k=2$. Note also that, in a general scenario, during the collaborative regeneration process, different nodes may face different number of selfish nodes, affecting accordingly the necessary bandwidth for regenerations. (ii) Consider now that both newcomer nodes collaborate, but there is $\Lc_0=1$ selfish node among the live nodes. In the worst case, both newcomers try two well behaved lives nodes and the same non-responding node. It might or not be easy for these newcomers to contact other nodes that are willing to help with the download. So if the newcomers decide to keep on downloading more from only the already responding nodes, we get that they each need to download at least 1 piece of data from one responding live node, and 2 pieces from the other responding live node, for an average of $\beta_{av}=(1+1/2)/2=3/4$ download bandwidth, after which collaboration can proceed as normal. The bandwidth costs are summarized in Table \ref{tab:sel-cost}. It can be seen that in this case $\Lc_0$ is not harmful for total bandwidth cost per repair, though it does imbalance the network load.

Let us now consider the case of polluting nodes, where we first assume that the polluting nodes do not interfere with data collection. (i) If one of the two collaborating nodes is polluting, then the other node has no choice but to retrieve the whole object from the live nodes, and no collaboration is possible. In this particular toy example, this gives the same end result as with one selfish collaborating node, since in both cases reconstructing the object is needed. (ii) If $\Bc_0=1$, in the worst case, both collaborating nodes get 1 fake encoded piece of data, and 2 genuine ones. Now to check which data, if any, is corrupted, 2 more genuine encoded fragments are needed. However, since the nodes do not know which of the live nodes might have gone rogue, they are forced to contact the remaining two more nodes. This inflates the number $d=3$ to $d=5$, the maximum amount of available live nodes here. These results are summarized in Table \ref{tab:pol-cost}.

Finally, in the worst case, the polluting nodes can also send wrong information to the data collector.
Since the stored data at the live nodes is encoded using Reed-Solomon code in this example, it is resistant to errors, as long as the number of errors is not more than twice the maximal number of tolerated erasures. However, this also means that either the number of contacted nodes is increased, or for a fixed $k$, the amount of data stored in each node has to be increased.

In this example, since we have 5 live nodes, only one polluting node sending wrong information to the data collector can be tolerated.
More generally, a $(n,\kappa)$ Reed-Solomon code is known to tolerate $n_s=n-\kappa$ erasures, or $n_b=(n-\kappa)/2$ errors, or more
generally $n_s$ erasures and $n_b$ errors as long as $n_s+2n_b \leq n-\kappa$.

%
%

%
%

\section{Practical considerations}
\label{sec:practicalconsiderations}
In practice, the number of Byzantine nodes is not known a priori. While selfish nodes are trivially dealt with, pollutants can not be detected a priori, and hence are difficult to deal with. Thus, regenerating nodes may try to first regenerate with responses from the minimal number of nodes, assuming (possibly, wrongly) that there are no pollutants. If there are however pollutants, then the regenerated block will be different from what ought to have been regenerated. For exact regeneration, a globally known hash function, and prior, secure and globally accessible look-up table with the hashes (signature) for the encoded fragments of an object can be used, to verify with low communication overhead whether the regenerated block is correct or not. If integrity violation is detected, then progressively more nodes data may be downloaded, possibly by contacting more nodes. Such an extrinsic information can alleviate the effect of Byzantine nodes. As soon as the node has enough good information to regenerate, it can be easily verified, thus, there is no need to waste one bit of good information just to negate each wrong bit. For the example in Section \ref{sec:ICCExactRGC}, with the use of such extra information, if there is one pollutant among the live nodes, the regenerations could be carried out by contacting at most four of the live nodes, and one can also tolerate upto two Byzantine nodes - both infeasible without such extra information.

The actual achieved system performance will depend on the precise protocol details, and there will in all cases be additional protocol overheads, both in terms of storage as well as bandwidth needs. Such systems considerations were beyond the scope of the current paper which studies the theoretical constraints of regenerating codes in the presence of Byzantine nodes. Furthermore, regenerating codes incur high computational complexity \cite{biersackRGCstudy} even without consideration of Byzantine failures. Byzantine nodes will further amplify the computational overheads. Thus, even though regenerating codes have promising qualities (theoretically), and have been much studied in the last few years, all these practical issues need to be taken into account and studied holistically, to determine their benefits and trade-offs in practice.


%
%

\section{Related works}
We have already provided a concise survey of regenerating codes related literature in the discussion precursing the new bounds for collaborative regenerating codes under Byzantine faults determined in this paper. Thus, here we will discuss about pollution attacks in general, both in the context of different kinds of peer-to-peer systems, and in the context of network coding.

Pollution attacks are mitigated in peer-to-peer content dissemination systems \cite{IntegrityP2PStreaming,p2ptvpollution,p2pRSS} using a combination of proactive strategies such as digital signature provided by the content source or by reactive strategies such as by randomized probing of the content source, leveraging on the causal relationship in the sequence of content to be delivered, as well as by deploying reputation mechanisms. In such settings, the prevention of pollution attacks is furthermore facilitated by a continuous involvement of the content source, which is assumed to be online.

Generally speaking, P2P storage environments are fundamentally different from P2P content distribution networks. The content owner may or not be online all the while. Furthermore, the very premise of regenerating codes is a setting where no one node possesses the whole copy of the object to be stored, i.e., a hybrid storage strategy where one full copy of the data is stored in addition to the encoded blocks, is excluded for other practical considerations. Likewise, different stored objects may be independent of each other. Hence, mechanisms to provide protection against errors as an inherent property of the code (similar to error correcting codes) becomes essential. The presented study looks at the fundamental capacity of such codes under some specific adversarial models. This work is thus complementary to other existing storage systems approaches such as incentive and reputation mechanisms \cite{samsara} and remote data checking techniques \cite{RDC-2010} for data outsourced to third parties to name a few. Likewise, Byzantine algorithms have been used in Oceanstore \cite{oceanstore} to support reliable data updates. The focus there is on application level support for updating content, rather than storage infrastructure level Byzantine behavior studied in this paper.

Pollution attacks have also been studied specifically in the context of network coding where it has already been noticed that though collaboration among the nodes through coding does increase the throughput, it also makes the network much more vulnerable to pollution attacks than under traditional routing. To remedy this threat, several authentication techniques have been studied in the context of network coding, such as digital signatures (for e.g., \cite{CJL,YWRG,ZKMH,BFKW} and authentication codes \cite{OF}.

%
%

\section{Conclusion}

Leveraging on network coding results, regenerating codes were introduced as a redundancy technique in networked distributed storage. Collaboration among the nodes participating in the regeneration process has recently been shown to improve the storage-bandwidth trade-offs. In this paper we determine the resilience capacity of collaborative regeneration in the presence of selfish or polluting nodes, and expose that collaboration may be detrimental under Byzantine attacks to such an extent that it may instead be better not to collaborate. We also show that, while collaborative regeneration is extremely vulnerable as a stand alone process, Byzantine attacks can be easily mitigated using some additional extrinsic information.

%
%

\end{document}